\newtheorem{theorem}{Theorem}
\newtheorem*{proof}{Proof}
\newtheorem{definition}{Definition}
\newtheorem{lemma}{Lemma}
\newtheorem{corollary}{Corollary}
\DeclarePairedDelimiter\floor{\lfloor}{\rfloor}
\newcommand{\cent}{\hbox{\textrm {\rlap/c}}}
\def\bitcoin{%
  \leavevmode
  \vtop{\offinterlineskip 
    \setbox0=\hbox{\textnormal Q}%
    \setbox2=\hbox to\wd0{\hfil\hskip-.03em
    \vrule height .4ex width .15ex\hskip .08em
    \vrule height .4ex width .15ex\hfil}
    \vbox{\copy2\box0}
\box2}}
\def\signed#1{{\leavevmode\unskip\nobreak\hfil\penalty50\hskip2em
  \hbox{}\nobreak\hfil(#1)%
  \parfillskip=0pt \finalhyphendemerits=0 \endgraf}}
  \newsavebox\mybox{}
\def\@maketitle{%
  \newpage
  \null
  \vskip 2em%
  \begin{center}%
  \let \footnote \thanks
    {\Large\bfseries \@title \par}%
    \vskip 1.5em%
    {\normalsize
      \lineskip .5em%
      \begin{tabular}[t]{c}%
        \@author
      \end{tabular}\par}%
    \vskip 1em%
    {\normalsize \@date}%
  \end{center}%
  \par
  \vskip 1.5em}
\title{Quantum Bitcoin: An Anonymous and Distributed Currency Secured by the
No-Cloning Theorem of Quantum Mechanics}
\author{Jonathan Jogenfors%
    \thanks{Electronic address:
\href{mailto:jonathan.jogenfors@liu.se}{jonathan.jogenfors@liu.se}}}
\affil{Information Coding Group,\\Department of Electrical
Engineering,\\Linköping University, Sweden}
\begin{document}

\begin{titlingpage}
    \maketitle

    \begin{abstract}
        The digital currency Bitcoin has had remarkable growth since it was
        first proposed in 2008. Its distributed nature allows currency
        transactions without a central authority by using cryptographic methods
        and a data structure called the blockchain. In this paper we use the
        no-cloning theorem of quantum mechanics to introduce Quantum Bitcoin, a
        Bitcoin-like currency that runs on a quantum computer. We show that our
        construction of quantum shards and two blockchains allows untrusted
        peers to mint quantum money without risking the integrity of the
        currency. The Quantum Bitcoin protocol has several advantages over
        classical Bitcoin, including immediate local verification of
        transactions. This is a major improvement since we no longer need the
        computationally intensive and time-consuming method Bitcoin uses to
        record all transactions in the blockchain. Instead, Quantum Bitcoin only
        records newly minted currency which drastically reduces the footprint
        and increases efficiency. We present formal security proofs for
        counterfeiting resistance and show that a quantum bitcoin can be re-used
        a large number of times before wearing out -- just like ordinary coins
        and banknotes. Quantum Bitcoin is the first distributed quantum money
        system and we show that the lack of a paper trail implies full anonymity
        for the users. In addition, there are no transaction fees and the
        system can scale to any transaction volume.
    \end{abstract}
\end{titlingpage}

\section{Introduction}\label{sec:introduction}
Modern society relies on money to function. Trade and commerce is performed
using physical tokens (coins, banknotes) or electronically (credit cards, bank
transfers, securities). Recently, cryptographic currencies such as Bitcoin have
emerged as a new method to facilitate trade in a purely digital environment
without the need for a backing financial institution. Common to all functioning
currencies is demand together with a controlled supply. Traditional,
government-backed currencies mint currency according to rules decided by
politics while Bitcoin works according to pre-defined rules. The currencies are
then protected from counterfeiting, either by physical copy-protection in the
case of coins, banknotes and cashier's checks, or in Bitcoin by applying
cryptography. A detailed description of Bitcoin is given in Section\nobreakspace \ref {sec:bitcoin}.

The laws of quantum mechanics have given rise to interesting applications in
computer science, from the quadratic speedup of unstructured database search due
to \textcite{Grover1996} to the polynomial-time algorithm for integer
factorization by \textcite{Shor1994}. These \enquote{quantum} algorithms are
faster than their classical counterparts, showing that some computing problems
can be solved more efficiently if a classical computer is replaced by a quantum one. In
addition, quantum states are disturbed when measured, which has given rise to to
quantum cryptography protocols such as BB84~\cite{Bennett1984} and
E91~\cite{Ekert1991}, where the latter uses the quantum phenomena of
entanglement. See \textcite{Broadbent2015} for a recent survey of quantum
cryptography.

This begs the question: can quantum mechanics help us design new, improved money
systems? The answer is yes. As shown by \textcite{Wiesner1983}, the
\textbf{no-cloning theorem}~\cite{Wootters1982} provides an effective
basis for copy protection, although Wiesner's results predated the actual
theorem. See Section\nobreakspace \ref {sec:quantum-money} for a more detailed
history of quantum money.

This paper introduces Quantum Bitcoin, a new currency that combines the
copy-protection of quantum mechanics with Bitcoin to build a payment system with
several advantages over existing systems. We present some necessary background
in Sections\nobreakspace \ref {sec:bitcoin} and\nobreakspace  \ref {sec:quantum-money}, followed by the main contribution in
Section\nobreakspace \ref {sec:quantum-bitcoin}. Then, we list the numerous advantages of Quantum
Bitcoin in Section\nobreakspace \ref {sec:comparison} and conclude in Section\nobreakspace \ref {sec:conclusion}. Due to
space constraints, the security analysis has been moved to Appendix\nobreakspace \ref {sec:analysis}.

\section{The Bitcoin protocol}\label{sec:bitcoin}
The Bitcoin protocol was proposed in 2008 by \textcite{Nakamoto2008}. The true
identity behind that pseudonym still remains a mystery, but the concepts
introduced in the original whitepaper have proven themselves by giving rise to a
currency with a market cap exceeding 6.4 billion USD as of April 2016.

In order for a currency to function, there must be a finite amount in
circulation as well as a controlled supply of new currency. Traditional
currencies such as USD and EUR are controlled by a central organization, usually
called the central bank. Bitcoin instead uses cryptography to distribute this
task over a peer-to-peer network of users on the Internet.

Central to Bitcoin is the \textbf{blockchain}, which is a distributed ledger
that records all transactions of every user. Using the blockchain, a user can
compute his or her account balance by summing over all transactions to and from
that account. A transaction is initiated by the sending party by digitally
signing and then broadcasting a transaction message. The receiver of the
transaction sees the transaction message, but is advised to wait until third
parties, \textbf{miners} independently verify its validity. Otherwise, the
sender could perform double-spending, where the same unit of currency is
simultaneously and fraudulently sent to several receivers without them noticing.

A miner receives the broadcast transaction message and checks his or her local
copy of the blockchain to check the transaction against the miner's local
policy~\cite{Okupski2015}. Usually, this means that the sender of the
transaction must prove that he or she has knowledge of the private key
corresponding to the public key of the originating account by using a signature.
Also, the miner checks that the transferred bitcoin have not been spent. If the
transaction is valid, the miner wants to append it to the blockchain.

Appending new data to the blockchain is the critical part of the Bitcoin
protocol, and it requires authentication of the appended data. Without
authentication, a malicious miner could add invalid transactions to the
blockchain, thereby defrauding users. Traditional authentication methods cannot
be used for this purpose, as Bitcoin miners are only loosely organized,
anonymous and untrusted. Instead, \textcite{Nakamoto2008} uses a
\textbf{proof-of-work} puzzle, an idea introduced by \textcite{Back2002}. Here,
miners authenticate their verification by proving that they have spent computing
power, and therefore energy. This prevents the Sybil attack~\cite{Douceur2002},
in which an attacker can flood a hypothetical voting mechanism. Such an attack
becomes prohibitively expensive since each \enquote{vote} must be accompanied by
a proof of spent energy.

The essentials of a proof-of-work puzzle is as follows: The data $d$ is appended
to a random nonce value $r$ to produce $r+d$. This is fed to a hash function $f$
to produce the hash value $h=f(r+d)$. Next, the hash value is compared to a
certain \textbf{threshold}. If $h$ (interpreted as an integer on hexadecimal
form) is smaller than the threshold value, the transaction is verified and $d$
together with $r$ is then broadcast to the network. The nonce value $r$ can be
seen as a solution to the proof-of-work puzzle $d$. The solution is easily
verified, as it only requires one hashing operation $f$. If the nonce $r$ is not
a solution to the proof-of-work puzzle, the miner will have to try a new random
nonce $r$ and the process repeats. In fact, finding pre-images to secure hash
functions is computationally difficult and requires a large number of trials.

Bitcoin implements the proof-of-work puzzle by packing a number of transactions
into a so-called \textbf{block}. Each block contains, among other things, a
timestamp, the nonce, the hash value of the previous block, and the
transactions~\cite{Okupski2015}. The previous hash value fulfills an important
function, as it prevents the data in previous blocks being modified. This
imposes a chronological order of blocks, and the first Bitcoin block, called the
\textbf{Genesis block} was mined on January 3rd 2009.

Bitcoin miners are rewarded for their work by giving them newly minted bitcoin.
In fact, this is the only way in which new bitcoin are added to the network and
this rate must be controlled and predictable in order to prevent runaway
inflation. As more and more miners solve the proof-of-work puzzle, the faster
new blocks will be found, and new bitcoin will be at a runaway rate. The same
thing happens as computers become faster and more specialized. Bitcoin prevents
inflation by dynamically scaling the \textbf{difficulty} of the proof-of-work
puzzle to reach a target of one block found, on average, every ten
minutes~\cite{Okupski2015}. The difficulty is controlled via the threshold, or
the number of leading zeros required in the hash value. In
Appendix\nobreakspace \ref {sec:reuse-analysis} we designate $T_{block}$ as the average time between
blocks, so that Bitcoin uses $T_{block}=\SI{600}{\second}$. A quantitative study
by \textcite{Karame2012} suggests that the distribution of measured mining times
corresponds to a shifted geometric distribution with parameter 0.19.

The mining reward is implemented as a special type of transaction, called a
\textbf{coinbase}~\cite{Okupski2015} which is added to the block by the miner.
The reward size was originally 50 bitcoin, and is halved every 210000
blocks or approximately four years. This predictable reduction of mining reward is an
\textbf{inflation control scheme} since it controls the long-term supply of new
currency. It is expected that the last new bitcoin will be mined in the year
2140, when the reward falls below \num{e-8}, the smallest accepted bitcoin
denomination. There will only be 21 million bitcoin at this point,
however mining is expected to continue since miners also collect transaction
fees~\cite{Nakamoto2008,Kaskaloglu2014}.

When a new block is found it is added to the blockchain. All other miners must
then restart their progress, as the transactions they attempted to include have
already been included in a block. There is a possibility that a block has been
mined by a malicious miner, so the other miners will themselves check all
transactions in that block to see that they are valid. If a miner is satisfied
with the block and its hash value, it will restart the mining process based on the
newly mined block. If the block is invalid, it will be ignored by the network.

There is also a possibility of two miners independently mining a block, causing
a \textbf{fork}. There will then be ambiguity as to which block is considered
the valid one, and miners will randomly choose which block they choose as
starting point. With high probability, one of these branches will be the longest
one, causing the majority of miners to switch to that branch. Thus, the network
resolves forks by itself, at the cost of a nonzero probability that newly mined
blocks will be abandoned. Therefore, Bitcoin users are advised to wait until a
transaction has been \textbf{confirmed} by at least six consecutive
blocks~\cite{Karame2012}. Otherwise, there is a possibility that the block
containing the transaction is invalid, thereby invalidating the entire
transaction.

The goal of the blockchain is to prevent invalid transactions. To perform a
double-spend, an attacker must convince a user by mining at least six blocks.
A benevolent miner will not verify an invalid transaction, so the only way to
get it included in a block is for the attacker to mine it himself. This is done
in competition with the benevolent miners, so the probability of success depends
on the number of proof-of-work trials per second the attacker can perform. In
addition, the miner must win against the benevolent miners six times -- in a
row.

According to \textcite{Nakamoto2008} the probability for a malicious miner to
succeed in verifying an invalid transaction is exponentially small in the number
of confirmations as long as a majority of miners (i.e. computing power) is used
for benevolent purposes. This implies that the Bitcoin protocol is resistant to
double-spending attacks. However, each confirmation takes 10 minutes to finish,
so those six confirmations need one hour to finish, making transactions slow. In
addition, \textcite{Karame2012} found considerable variance in the time it takes
to mine a block; they measured a standard deviation of mining time of almost 15
minutes. Bitcoin users must therefore make a decision between security and
faster transaction times.

\section{Previous Proposals for Quantum Money}\label{sec:quantum-money}
As early as around 1970, \textcite{Wiesner1983} proposed a scheme that uses the
quantum mechanics to produce unforgeable \textbf{quantum
banknotes}~\cite{Broadbent2015}, however it took time for this result to be
published. The paper was initially rejected~\cite{Brassard2005} and according to
\textcite{Aaronson2012} it took 13 years until it was finally
publishedin 1983~\cite{Wiesner1983}. In the same year, BBBV~\cite{Bennett1983}
made improvements to Wiesner's scheme, such as an efficient way to keep track of
every banknote in circulation. Another, more recent, extension by
\textcite{Pastawski2012} increases the tolerance against noise. Even more
recently, \textcite{Brodutch2014} presented an attack on the Wiesner and BBBV
schemes.

After BBBV, quantum money received less attention due to the seminal 1984 paper
by \textcite{Bennett1984} that created the field of quantum key distribution
(QKD). Following two decades where virtually no work was done on quantum money,
\textcite{Mosca2006,Mosca2007,Mosca2009} proposed \textbf{quantum coins} around
ten years ago. In contrast to quantum banknotes (where each banknote is unique),
quantum coins are all identical.

We distinguish between \textbf{private key} and \textbf{public-key} quantum
money systems. In a private-key system, only the bank that minted the quantum
money can verify it as genuine, while a public-key system allows anyone to
perform this verification. The advantages of a public-key system over a
private-key one are obvious, assuming similar security levels. Until recently,
all quantum money proposals were private-key, however in 2009
\textcite{Aaronson2009} proposed the first public-key quantum money system.
While this system was broken in a short time by \textcite{Lutomirski2009}, it
inspired others to re-establish security. A novel proposal by
\textcite{Farhi2010} produced a public-key system using knot theory and
superpositions of link diagrams, and this idea was further developed by
\textcite{Lutomirski2011}. Finally, \textcite{Aaronson2012} based a public-key
quantum money scheme on the hidden-subspace problem.

Another important distinction is between systems that have unconditional
security, and those secure under computational hardness assumptions. In an
unconditionally secure quantum money scheme, no attacker can break the system
even when given unlimited computation time. For instance, Wiesner's scheme is
unconditionally secure while BBBV is not. According to \textcite{Farhi2010},
public-key quantum money cannot be unconditionally secure. Instead, the
proposals by \textcite{Aaronson2009, Farhi2010, Aaronson2012} all rely on
computational hardness assumptions, as will ours.

Common to all proposals discussed above is a centralized topology, with a number
of users and one \enquote{bank} that issues (and possibly verifies) money. This
requires all users to fully trust this bank, as a malevolent bank can perform
fraud and revoke existing currency.

\section{Quantum Bitcoin}\label{sec:quantum-bitcoin}
In this paper we present the inner workings of Quantum Bitcoin, a quantum
currency with no central authority. As with most quantum money schemes the
central idea is the no-cloning theorem~\cite{Wootters1982} which shows that it
is impossible to copy an arbitrary quantum state $\ket \psi$. Quantum mechanics
therefore provides an excellent basis on which to build a currency, as
copy-protection is \enquote{built in}. In Appendix\nobreakspace \ref {sec:analysis} we quantify the
level of security the no-cloning theorem gives, and show that our Quantum
Bitcoin are secure against counterfeiting. For brevity, we will refer to the
classical Bitcoin protocol simply as \enquote{Bitcoin} for the rest of the
paper.

\subsection{Prerequisites}
Quantum Bitcoin uses a classical blockchain, just like the Bitcoin
protocol. For the purposes of this paper, we model the blockchain as a
random-access ordered array with timestamped dictionary entries. Blocks can be
added to the end of the chain by solving a proof-of-work puzzle, and blocks in
the chain can be read using a lookup function. In the Quantum Bitcoin
blockchain, the blocks only contain descriptions of newly minted Quantum
Bitcoin. Transactions are not recorded as they are finalized locally. We can
therefore model each block as a dictionary data structure, where dictionaries
are key-value pairs that match serial numbers $s$ to public keys\footnote{Do not
    confuse the public key $k_{public}$ with the key of the dictionary}
    $k_{public}$. We will use the following formal definition:
\begin{definition}
    A classical distributed ledger scheme $\mathcal L$ consists of the following
    classical algorithms:
    \begin{itemize}
        \item $\mathsf{Append}_{\mathcal L}$ is an algorithm which takes
            $(s,k_{public})$ as input, where $s$ is a classical serial number
            and $k_{public}$ a classical public key. The algorithm fails if the
            serial number already exists in a block in the ledger. Otherwise, it
            begins to solve a proof-of-work puzzle by repeated trials of random
            nonce values. The algorithm passes if the puzzle is solved, at
            which time the ledger pair $(s,k_{public})$ is added as a new block.
        \item $\mathsf{Lookup}_{\mathcal L}$ is a polynomial-time algorithm that
            takes as input a serial number $s$ and outputs the corresponding
            public key $k_{public}$ if it is found in the ledger. Otherwise, the
            algorithm fails.
    \end{itemize}
    \label{def:ledger}
\end{definition}
While our formal definition is independent of the underlying block format and
security rules, we suggest adopting those used in Bitcoin. In addition,
$\mathsf{Append}_{\mathcal L}$ runs continuously until it passes -- if another
miner solves a proof-of-work puzzle it simply restarts the process transparently
to the caller.

Quantum Bitcoin also uses classical digital signatures. The scheme used by
Bitcoin is 256-bit ECDSA, but we will not commit to a specific
algorithm for Quantum Bitcoin. Instead, we use the following abstract model,
adapted from \textcite{Aaronson2012}:
\begin{definition}
    A classical public-key digital signature scheme $\mathcal D$ consists of
    three probabilistic polynomial-time classical algorithms:
    \begin{enumerate}
        \item $\mathsf{KeyGen}_{\mathcal D}$ which takes as input a security
            parameter $n$ and randomly generates a key pair
            $(k_{private},k_{public})$.
        \item $\mathsf{Sign}_{\mathcal D}$ which takes as input a private key
            $k_{private}$ and a message $M$ and generates a (possibly
            randomized) signature
            $\mathsf{Sign}_{\mathcal D}(k_{private},M)$.
        \item $\mathsf{Verify}_{\mathcal D}$, which takes as input $k_{public}$, a
            message $M$, and a claimed signature $\omega$, and either accepts or
            rejects.
    \end{enumerate}
\end{definition}
The key pair $(k_{private},k_{public})$ follows the usual conventions for
public-key cryptography: The private key is to be kept secret, and it should be
computationally infeasible for an attacker to derive $k_{private}$ from
$k_{public}$. In Quantum Bitcoin, $k_{private}$ is only used to mint new
currency and should therefore be discarded when this is completed.

\subsection{The Hidden Subspace Mini-Scheme}
Quantum Bitcoin uses a so-called \textbf{mini-scheme} model, inspired by
\textcite{Lutomirski2009,Farhi2010,Aaronson2012}. As will become clear in
Appendix\nobreakspace \ref {sec:analysis}, a mini-scheme setup allows for a simple way to prove the
security of Quantum Bitcoin. The mini-scheme can only mint and verify one single
Quantum Bitcoin, and in Sections\nobreakspace \ref {sec:construction} and\nobreakspace  \ref {sec:reuse} we extend this to a
full Quantum Bitcoin system using a blockchain. The explicit mini-scheme we adopt is an adaptation
of the Hidden Subspace mini-scheme system introduced by \textcite{Aaronson2012}.
In this scheme, Quantum Bitcoin states are on the form
\begin{equation}
    \ket{A}=\frac1{\sqrt{|A|}}\sum_{x\in A}\ket{x},
    \label{eqn:superposition}
\end{equation}
where $A$ is a subspace of $\mathbb F_2^n$. Here, $\mathbb F_2^n$ are bit
strings of length $n$ and the subspace $A$ is randomly generated from a set of
$n/2$ secret generators. In the same spirit, we define $A^\perp$ as the
orthogonal complement to $A$, that is, the set of $y\in \mathbb F_2^n$ such that
$x \cdot y=0$ for all $x\in \mathbb F_2^n$. Next, we define a membership oracle
$U_A$:
\begin{equation}
    U_A\ket{x}=
    \begin{cases}
        \enskip        -\ket{x} &\text{ if } x \in A\\
        \enskip \ket{x} &\text{ otherwise, }
    \end{cases}
\end{equation}
which is used to decide membership in $A$. We will later show how this oracle
can be explicitly implemented. Using $U_A$, we can implement a
projector $\mathbb P_A$ onto the set of basis states in $A$:
\begin{enumerate}
    \item Initiate a control qubit $\ket{0}$
    \item Apply $H$ to the control qubit
    \item Apply $U_A$ to $\ket x$ conditioned on the control qubit being in
        state $\ket 1$
    \item Measure the control qubit in the Hadamard basis $\{\ket +, \ket - \}$
    \item Postselect on getting the outcome $\ket -$
\end{enumerate}
Therefore, $\mathbb P_A$ operates on $\ket x$ in the following way:
\begin{equation}
    \mathbb P_A\ket x=\frac 1 {\sqrt 2}\left( \ket 0 \ket x + \ket 1U_A\ket x
    \right)=
    \begin{cases}
        \enskip        \ket{-} \ket x \text{ if } x \in A\\
        \enskip \ket{+}\ket x \text{ otherwise. }
    \end{cases}
\end{equation}
We define $U_{A^\perp}$ and $\mathbb P_{A^\perp}$ in a similar way as above,
except we instead operate on $A^\perp$. Together with the projectors $\mathbb
P_A$ and $\mathbb P_{A^\perp}$ we can create a unitary operator
\begin{equation}
    V_A=
    H_2^{\otimes n} \mathbb P_{A^\perp}
    H_2^{\otimes n} \mathbb P_{A},
\end{equation}
where $H_2^{\otimes n}$ denotes the quantum Fourier transform over $\mathbb
F_2^n$. We will use $V_A$ to verify Quantum Bitcoin states, where we interpret
$V_A\ket \psi=\ket A$ as passing and $V_A\ket \psi=0$ as failing. \Textcite[p.
28]{Aaronson2012} show that $V_A$ is a projector onto $A$, and that $V_A$
accepts an arbitrary state $\ket \psi$ with probability $|\braket{\psi|A}|^2$.
Formally, we can define the mini-scheme as follows:
\begin{definition}
    The Hidden Subspace mini-scheme $\mathcal M$ consist of two polynomial-time
    algorithms $\mathsf{Mint}_{\mathcal
    M}$ and $\mathsf{Verify}_{\mathcal M}$.
\end{definition}
Before we detail the minting and verification algorithms, we we need a way to
generate and verify quantum states and serial numbers. These algorithms should
have the following general structure:
\begin{definition}
    A \textbf{state generator} $\mathcal G(r)$ takes a random $n$-bit
    string $r$ and returns $(s_r, \langle A_r \rangle)$, where $s_r$ is a
    $3n$-bit string and $\langle A_r \rangle$ is a set of linearly independent
    generators $\{ x_1,\ldots,x_{n/2}\}$ for a subspace $A_r \leq \mathbb
    F_2^n$. We require that the serial numbers are distinct for every $r$.
    \label{def:quantum-state-generator}
\end{definition}
\begin{definition}
    A \textbf{serial number verifier} $\mathcal H(s)$ takes a serial number $s$
    and passes if it is a valid serial number $s=s_r$ for some $\langle A_r
    \rangle$ and fails otherwise.
    \label{def:serial-number-verifier}
\end{definition}
The algorithms for $\mathcal G(r)$ and $\mathcal H(s)$ can be implemented using
a random oracle, or explicitly using a scheme such as the multivariate
polynomial scheme introduced by \textcite[pp. 32--38]{Aaronson2012}. We are now
ready to complete the description of the mini-scheme $\mathcal M$ with the
following algorithms:
\begin{definition}
    $\mathsf{Mint}_{\mathcal M}(n)$ takes as input a security parameter $n$. It
    then randomly generates a secret $n$-bit key $r$ which it passes to the
    state generator $\mathcal G(r)$. The returned value is $(s_r,\langle
    A_r \rangle)$ which is used in Equation\nobreakspace \textup {(\ref {eqn:superposition})} to produce the
Quantum Bitcoin $(s_r,\rho)$, where $\rho=\ket{A}$.
\end{definition}

\begin{definition}
    $\mathsf{Verify}_{\mathcal M}(\cent{})$ takes as input an alleged Quantum
    Bitcoin \cent{} and performs the following checks, in order:
    \begin{enumerate}
        \item Form check: Accept if and only if $\cent{}$ has the form $(s,
            \rho)$, where $s$ is a classical serial number and $\rho$ is a
            quantum state. \item Serial number check: Accept if and only if the
                Serial Number
            Verifier $\mathcal H(s)$ accepts
        \item Apply $V_{A_r}=H_2^{\otimes 2}\mathbb P_{A_r^\perp}H_2^{\otimes
            2}\mathbb P_{A_r}$ to $\rho$ and accept if and only if
    $V_{A_r}(\rho)\neq 0$ \end{enumerate}
    Note that the verification procedure immediately fails if
    any of the above steps fail.
\end{definition}

\subsection{Naive Construction of Quantum Bitcoin}\label{sec:construction}
The mini-scheme $\mathcal M$ can only mint and verify one single quantum
bitcoin, so to build a usable Quantum Bitcoin ecosystem we need to extend the
model with a mechanism for minting and verifying any amount. For this purpose we
will define the full Quantum Bitcoin scheme, $\mathcal Q$, and implement it as
an extension of the mini-scheme $\mathcal M$. The connection between $\mathcal
M$ and $\mathcal Q$ is derived from the \enquote{standard construction} by
\textcite{Lutomirski2009,Farhi2010,Aaronson2012}. Formally, the definition of
the Quantum Bitcoin scheme is as follows:
\begin{definition}
    A public-key distributed Quantum Bitcoin scheme $\mathcal Q$ consists of the
    following algorithms:
    \begin{itemize}
        \item $\mathsf{KeyGen}_{\mathcal Q}$, a polynomial-time
            algorithm which takes as input a security
            parameter $n$ and randomly generates a key pair
            $(k_{private},k_{public})$.
        \item $\mathsf{Mint}_{\mathcal Q}$ which takes a security parameter $n$
            and a private key $k_{private}$ and generates a produces a quantum bitcoin \$.
        \item $\mathsf{Verify}_{\mathcal Q}$, a polynomial-time algorithm
            which takes as input an alleged quantum
            bitcoin $\cent{}$ and a corresponding public key $k_{public}$ and
            either accepts or rejects.
    \end{itemize}
    \label{def:quantum-money-scheme}
\end{definition}
Given a mini-scheme $\mathcal M=(\mathsf{Mint}_{\mathcal
M},\mathsf{Verify}_{\mathcal M})$, a digital signature scheme $\mathcal
D=(\mathsf{KeyGen}_{\mathcal D},\mathsf{Sign}_{\mathcal
D},\mathsf{Verify}_{\mathcal D})$ and a distributed ledger scheme $\mathcal
L=(\mathsf{Append}_{\mathcal L},\mathsf{Lookup}_{\mathcal L})$, we will
construct a first, intuitive, version of the Quantum Bitcoin scheme $\mathcal
Q=(\mathsf{KeyGen}_{\mathcal Q},\mathsf{Mint}_{\mathcal
Q},\mathsf{Verify}_{\mathcal Q})$. Later, we extend this standard construction
to protect against the reuse attack.

To begin the construction, we define $\mathsf{KeyGen}_{\mathcal Q}$ to simply be
$\mathsf{KeyGen}_{\mathcal D}$ from the digital signature scheme. Next, we
define the algorithm for
$\mathsf{Verify}_{\mathcal Q}$ for an alleged quantum bitcoin $\cent{}$:
\begin{enumerate}
    \item Check that $\cent{}$ is on the form
        $(s,\rho,\sigma)$, where the $s$ is a classical serial number, $\rho$ a
        quantum state, and $\sigma$ a classical digital
        signature.
    \item Call $\mathsf{Lookup}_{\mathcal L}(s)$ to retrieve
        the public key $k_{public}$ associated with
        the serial number $s$.
    \item Call $\mathsf{Verify}_{\mathcal
        D}(k_{public},s,\sigma)$ to
        verify the digital signature of the quantum bitcoin.
    \item Call $\mathsf{Verify}_{\mathcal M}(s,\rho)$ from the mini-scheme.
\end{enumerate}
The verification algorithm pass if and only if all of the above steps pass.
The main challenge of constructing Quantum Bitcoin is that the miners are
untrusted which is in contrast to previous quantum money schemes where minting
is done by a trusted entity such as a bank. In the same spirit as Bitcoin, the
intention is to take individually untrusted miners and still be able to trust
them as a group~\cite{Nakamoto2008}. Our first, Bitcoin-inspired attempt at the
$\mathsf{Mint}_{\mathcal Q}$ algorithm therefore becomes the following:
\begin{enumerate}
    \item Call $\mathsf{KeyGen}_{\mathsf Q}$ to randomly generate a key pair
        $(k_{private},k_{public})$.
    \item Generate a quantum bitcoin candidate by calling
        $\mathsf{Mint}_{\mathcal M}$, which returns $(s, \rho)$, where $s$ is a
        classical serial number and $\rho$ is a quantum state.
    \item Sign the serial number: $\sigma=\mathsf{Sign}_{\mathcal
        D}(k_{private},s)$.
    \item Call $\mathsf{Append}_{\mathcal L}(s,k_{public})$ to attempt to append
        the serial number $s$ and the public key $k_{public}$ to the ledger.
    \item If $\mathsf{Append}_{\mathcal L}$ failed, start again from step 2.
    \item If the serial number was successfully appended, put the serial number,
        quantum state and signature together to create the quantum bitcoin
        $\$=(s,\rho,\sigma)$.
\end{enumerate}
We can immediately identify the first major advantage of Quantum Bitcoin.
Whereas Bitcoin requires each transaction to be recorded into the blockchain --
a time-consuming process, Quantum Bitcoin transactions finalize immediately. Due
to the no-cloning theorem of quantum mechanics, the underlying quantum state in
the Quantum Bitcoin cannot be duplicated, thereby preventing counterfeiting in
itself (see Appendix\nobreakspace \ref {sec:analysis}). The only step of the protocol that uses
$\mathsf{Append}_{\mathcal L}$ is minting, which \enquote{normal} users don't
have to worry about.

\subsection{Preventing the Reuse Attack}\label{sec:reuse}
Our first attempts at the $\mathsf{Verify}_{\mathcal Q}$ and
$\mathsf{Mint}_{\mathcal Q}$ algorithms appear to work, but there is a problem.
In Quantum Bitcoin there is no trust assumption, so the users minting Quantum
Bitcoin can no longer be trusted to play by the rules as in the system by
\textcite{Aaronson2012}. This leads to a weakness that can be exploited using a
\textbf{reuse attack}: Ideally, the output of the minting algorithm and state
generator should be a unpredictable, even when fed the same argument twice.
Unfortunately, we must assume that all steps in $\mathsf{Mint}_{\mathcal Q}$ are
deterministic, with the obvious exception of $\mathcal{KeyGen}_{\mathcal Q}$.
Therefore, it is possible that a malicious miner generates a quantum bitcoin,
appends it to the blockchain, and then covertly reuse $k_{private}$ to produce
any number of quantum bitcoin that all pass verification.

This is a serious problem, since it allows that malicious miner to undermine the
payment system at any time. Imagine a scenario where a miner learns that the
quantum bitcoin he or she mined last year now is in possession by a political
opponent. The miner could then use $k_{private}$ to create a number of
identical, genuine, quantum bitcoin and disperse them everywhere. The natural
consequence is that the quantum bitcoin held by the opponent becomes worthless.
Compare this with Bitcoin. There, the blockchain records all transactions and a
miner therefore relinquishes control over the mined bitcoin as soon as it is
handed over to a recipient. In Quantum Bitcoin, however, there is no record of
who owns what, so there is no way to distinguish between the real and
counterfeit quantum bitcoin.

We prevent the reuse attack by adding a secondary stage to the minting
algorithm, where data is also appended to a new ledger $\mathcal L^{\prime}$. In
Appendix\nobreakspace \ref {sec:reuse-analysis} that this method makes the reuse attack improbable.
For the secondary mining step we introduce security parameters $m \geq 1$
and $T_{max}>0$ and the algorithm is as follows:
\begin{enumerate}
    \item A miner (this time called a \textbf{quantum shard miner}) uses the
        above \enquote{intuitive} minting scheme, but the finished product
        $(s,\rho,\sigma)$ is instead called a \textbf{quantum shard}.
    \item Quantum shard miners sell the quantum shards on a marketplace.
    \item Another miner (called a \textbf{quantum bitcoin miner}) purchases
        $m$ quantum shards $\{(s,\rho_i,\sigma_i)\}_{1\leq i \leq m}$ on the
        marketplace that, for all $1\leq i \leq m$, fulfill the following conditions:
        \begin{itemize}
            \item $\mathsf{Verify}_{\mathcal M}((s,\rho_i,\sigma_i))$ accepts
            \item The timestamp $T$ of the quantum shard in the Quantum Shard
                ledger $\mathcal L$ fulfills $t-T\leq T_{max}$, where $t$ is the
                current time.
        \end{itemize}
    \item The quantum bitcoin miner calls $\mathsf{KeyGen}_{\mathcal Q}$ to
        randomly generate a key pair $(k_{private},k_{public})$.
    \item The quantum bitcoin miner takes the serial numbers of the $m$ quantum
        shards and compiles the \textbf{classical descriptor}
        $s=(s_1,\ldots,s_m)$ and signs it as
        $\sigma_0=\mathsf{Sign}_{\mathcal D}(k_{private},s)$.
    \item The quantum bitcoin miner takes the $m$ quantum shards and, together
        with $\sigma_0$, produces a \textbf{quantum bitcoin candidate}:
        $(s_1,\rho_1,\sigma_1,\ldots,s_m,\rho_m,\sigma_m,\sigma_0)$.
    \item The quantum bitcoin miner calls $\mathsf{Append}_{\mathcal
        L^{\prime}}(s,k_{public})$ to
        attempt to pair the quantum bitcoin miner's public key $k_{public}$ with
        the classical descriptor $s$ in the ledger. Here, we
        require that $\mathsf{Append}$ fails if any of the $m$ quantum shards
        already have been combined into a quantum bitcoin that exists in the
        ledger $\mathcal L^{\prime}$.
\end{enumerate}
This process is the complete quantum mining protocol, and it works because each
participant is incentivized: quantum shard miners invest computing power to
produce quantum shards, which quantum bitcoin miners want for quantum bitcoin
production. As there is only a finite number of quantum shards, they will have
nonzero monetary value, thus rewarding the quantum shard miners for the energy
necessarily consumed. In turn, quantum bitcoin miners invest computing power to
mint quantum bitcoin from quantum shards. The quantum bitcoin miners are
rewarded with valid quantum bitcoin, which, due to their limited supply, are
expected to have nonzero monetary value.

According to \textcite{Nakamoto2008}, such incentives \enquote{may help nodes
to stay honest} and an attacker who has access to more computing power than the
rest of the network combined finds that it is more rewarding to play by the
rules than commit fraud. Also, the reuse attack is prevented
because two-stage mining makes it overwhelmingly difficult for a single entity
to first produce $m$ quantum shards and then combine them to a quantum
bitcoin.

Note the requirement that the quantum shards are less than $T_{max}$ old. This
is needed because the probability of an attacker successfully mining a quantum
shard approaches 1 as time goes to infinity. Therefore, given enough time, a
malicious miner can produce $(1-\varepsilon)m$ valid quantum shards which it
then could combine into a valid quantum bitcoin. The parameter $T_{max}$
prevents this from happening by expiring old quantum shards before this can
happen. In addition, while the algorithm makes use of \emph{two} ledgers instead
of one, it should be trivial to encode the two ledgers into one single
blockchain.

What remains is to slightly modify $\mathsf{Verify}_{\mathcal Q}$ to take
two-stage mining into account. We introduce an additional security parameter
$\lambda>0$ which will be determined later.
\begin{enumerate}
    \item Check that $\cent{}$ is on the form
        $(s_1,\rho_1,\sigma_1,\ldots,s_m,\rho_m,\sigma_m,
        \sigma_0)$, where the $s_i$ are classical serial numbers, $\rho_i$ are
        quantum states, and $\sigma_i$ (including $\sigma_0$) are digital
        signatures.
    \item Call $\mathsf{Lookup}_{\mathcal L^{\prime}}((s_1,\ldots,s_m))$ to
        retrieve the public key $k_{public}$ of the quantum bitcoin miner
        associated with the classical descriptor $(s_1,\ldots,s_m)$.
    \item Call $\mathsf{Verify}_{\mathcal
        D}(k_{public},(s_1,\ldots,s_m),\sigma_0)$ to
        verify the digital signature of the quantum bitcoin.
    \item For each $1\leq i \leq m$, call $\mathsf{Lookup}_{\mathcal L}(s_i)$ in
        order to retrieve the corresponding public keys $k_{public,i}$ from the
        quantum shard miners.
    \item For each $1 \leq i \leq m$, call $\mathsf{Verify}_{\mathcal
        D}(k_{public,i},s_i,\sigma_i)$ to verify the digital signatures of each
        of the quantum shards.
    \item For $1\leq i\leq m$, call $\mathsf{Verify}_{\mathcal M}(s_i,\rho_i)$
        to verify the quantum bitcoin states.
\end{enumerate}
The above algorithm checks the digital signatures of both the quantum bitcoin
and all contained quantum shards before calling the verification procedure of
the mini-scheme $\mathcal M$. The verification passes if and only if at least
$(1-\varepsilon-\lambda)m$ of the invocations of $\mathsf{Verify}_{\mathcal M}$
pass.

This concludes the description of Quantum Bitcoin. We have defined the minting
and verification algorithms and shown how it ties together with the blockchain
to build a working currency. What about security? In Appendix\nobreakspace \ref {sec:analysis} we will
give formal security proofs that
\begin{enumerate*}[label=(\roman*)]
    \item Quantum Bitcoin resists counterfeiting, or, more explicitly, that the
        false positive and false negative error probabilities of
        $\mathsf{Verify}_{\mathcal Q}$ are exponentially small in
        $n$,
    \item that the probability of a successful reuse attack is exponentially
        small in $1/T_{max}$, and
    \item that a quantum bitcoin can be used, or verified, an exponential
        number of times (in $n$) before wearing out.
\end{enumerate*}

\section{Comparison to Classical Bitcoin}\label{sec:comparison}
We will now compare Quantum Bitcoin to the classical Bitcoin protocol by
\textcite{Nakamoto2008} and show that Quantum Bitcoin has several advantages.
Bitcoin transactions must be verified by third-party miners. The time this takes
averages on one hour, but as previously mentioned the waiting time has
considerable variance~\cite{Karame2012}. Bitcoin transactions are therefore
slow; too slow for ma customer to wait in the check-out line. In contrast,
Quantum Bitcoin transactions are immediate and only requires the receiver to
have read-only access to a reasonably recent copy of the blockchain. We also
note that Quantum Bitcoin transactions are local, so that no blockchain must be
updated, nor does it require a third party to know of the transaction.

These local transactions are independent of network access. Bitcoin requires
two-way communication with the Internet, while Quantum Bitcoin transactions can
be performed in remote areas, including in space. The read-only blockchain
access requirement makes it possible to store a local off-line blockchain copy
in, for example, a book. To receive Quantum Bitcoin transaction, the user simply
needs to read from this book, given that the quantum bitcoin to be verified is
older than the book in question.

Another performance advantage is scalability. According to
\textcite{Garzik2015}, Bitcoin as originally proposed by \textcite{Nakamoto2008}
has an estimated global limit of seven transactions per second. In comparison,
the local transactions of Quantum Bitcoin implies that there is no upper limit
to the transaction rate. It should be noted, however, that the minting rate is
limited by the capacity of the Quantum Shard and Quantum Bitcoin blockchains. By
placing the performance restriction only in the minting procedure, the
bottleneck should be much less noticeable than if it were in the transaction
rate as well.

Local transactions also mean anonymity, since only the sender and receiver are
aware of the transaction even occurring. No record, and therefore no paper
trail, is created. In essence, a Quantum Bitcoin transaction is similar to that
of ordinary banknotes and coins, except no central point of authority has to be
trusted. Bitcoin, on the other hand, records all transactions in the blockchain
which allows anybody with a copy to trace transaction flows, even well after the
fact. This has been used by several
authors~\cite{Reid2013,Meiklejohn2013,Moser2013,Venkatakrishnan2013,Kondor2014,Androulaki2013}
to de-anonymize Bitcoin users.

Another advantage of Quantum Bitcoin is that transactions are free. Classical
Bitcoin transactions usually require a small fee~\cite{Nakamoto2008} to be paid
to miners in order to prevent transaction spam and provide additional incentives
for miners. It is also believed that these fees will allow mining to continue
past the year 2140, when the last new bitcoin is expected to be mined. In
Bitcoin, mining is required for transactions to work, but this is not the case
in Quantum Bitcoin, again due to local transactions. Even better, if Quantum
Bitcoin adopts an inflation control scheme similar to that of Bitcoin, there
will be no need for Quantum Bitcoin mining when the 21 million quantum bitcoin
have been found. Users will still be able to perform transactions even though
mining has stopped. When this occurs, it is realistic to publish a
\enquote{final version} of the Quantum Bitcoin blockchain in a book which then
would contain descriptors of all quantum bitcoin that will ever exist.

Compared to Bitcoin, the blockchain of Quantum Bitcoin is smaller and grows at a
more predictable rate. By nature, data added to a blockchain can never be
removed, and as of March 2016 the size of the Bitcoin blockchain exceeds
\SI{60}{\giga\byte}. This large size has made it difficult to implement a
complete Bitcoin implementation on smaller devices. Quantum Bitcoin also has a
growing blockchain, however it only grows when minting currency, not due to
transactions.

Per the discussion in the previous paragraph, Quantum Bitcoin mining could
become superfluous when all quantum bitcoin have been mined, which leads to an
upper limit of the Quantum Bitcoin blockchain. For example, if we limit the
number of quantum bitcoin to 21 million (as above) and choose 512-bit serial
numbers $s$ and a 256-bit digital signature scheme $\mathcal D$, the Quantum
Bitcoin blockchain will only ever grow to roughly \SI{2}{\giga\byte} in size
plus some overhead. This is an order of magnitude smaller than the
Bitcoin blockchain today, and much more manageable.

\section{Conclusion and Future Work}\label{sec:conclusion}
Quantum Bitcoin is a tangible application of quantum mechanics where we
construct the ideal distributed, publicly-verifiable payment system. The
currency works on its own without a central authority, and can start to work as
soon as it is experimentally possible to prepare, store, measure and reconstruct
quantum states with low enough noise. The no-cloning theorem provides
the foundation of copy-protection, and the addition of a blockchain
allows us to produce currency in a distributed and democratic fashion. Quantum
Bitcoin is the first example of a secure, distributed payment system with local
transactions and can provide the basis for a new paradigm for money, just like
Bitcoin did in 2008.

Two parties can transfer quantum bitcoin by transferring the Quantum Bitcoin
state over a suitable channel and reading off a publicly-available blockchain.
Transactions are settled immediately without having to wait for confirmation
from miners, and the Quantum Bitcoin can be used and re-constructed an
exponential number of times before they wear out. There is no transaction fee,
yet the system can scale to allow an unlimited transaction rate.

Note that while Quantum Bitcoin is secure against any counterfeiter with access
to a quantum computer, the protocol is not unconditionally secure. The
corresponding security proofs must therefore place the standard complexity
assumptions on the attacker. See Appendix\nobreakspace \ref {sec:analysis} for the complete security
analysis.

We invite further study of our proposal and welcome attempts at attacking this
novel protocol. There are also some challenges that should be addressed in
future work: quantum bitcoin are atomic and there is currently no way to
subdivide quantum bitcoin into smaller denominations, or merge them into larger
ones. A practical payment system would benefit greatly from such mechanisms as
it otherwise becomes impossible to give change in a transaction. The security
proofs in this paper should also be extended to the non-ideal case with the
addition of noise, decoherence and other experimental effects.

\section*{Acknowledgements}
The author would like to thank Niklas Johansson and Prof. Jan-Åke Larsson for
interesting discussions, feedback and proof-reading the manuscript.

\appendix
\begin{appendices}
    \section{Security Analysis}\label{sec:analysis}
    In this section we perform the security analysis of Quantum Bitcoin and show
    that it is secure against counterfeiting. Here, we reap the benefits of the
    mini-scheme setup as the proof becomes relatively easy. We begin by
    quantifying the probability of false negatives and false positives in the
    verification process and then we show that the mini-scheme is secure,
    followed by the observation that a secure mini-scheme implies security of
    the full system $\mathcal Q$.

    \subsection{Counterfeiting}\label{sec:counterfeiting}
    Our formal security analysis begins by modeling a counterfeiter, which is a
    quantum circuit that produces new, valid, quantum bitcoin outside of the normal
    minting procedure.
    \begin{definition}
        A \textbf{counterfeiter} $C$ is a quantum circuit of polynomial size (in
        $n$) which maps a polynomial (in $n$) number of valid quantum bitcoin to a
        polynomial number (in $n$) of new, possibly entangled alleged quantum
        bitcoin.
        \label{def:counterfeiter}
    \end{definition}
    A more detailed description of the \enquote{composite} counterfeiter of
    Quantum Bitcoin is given in \textcite[pp. 42--43]{Aaronson2012}. We next
    need to quantify the probability of a counterfeit quantum bitcoin to be
    accepted by the verification procedure. This is the probability of a false
    positive:
    \begin{definition}
        A Quantum Bitcoin scheme $\mathcal Q$ has \textbf{soundness error}
        $\delta$ if, given any counterfeiter $C$ and a collection of $q$ valid
        quantum bitcoin
        $\$_1,\ldots,\$_q$ we have
        \begin{equation}
            Pr(\mathsf{Count}(C(\$_1,\ldots,\$_q))>q)\leq \delta,
        \end{equation}
        where $\mathsf{Count}$ is a \textbf{counter} that takes as input a
        collection of (possibly-entangled) alleged quantum bitcoin $\$_1,\ldots
        ,\$_r$ and outpts the number of indices $0\leq i \leq r$ such that
        $\mathsf{Verify}(\$_i)$ accepts
        \label{def:soundness}
    \end{definition}
    Conversely, we quantify the probability of false negative, i.e. the probability
    that a valid quantum bitcoin is rejected by the verification procedure:
    \begin{definition}
        A Quantum Bitcoin scheme $\mathcal Q$ has \textbf{completeness
        error} $\varepsilon$ if $\mathsf{Verify}(\$)$ accepts with probability at least
        $1-\varepsilon$ for all valid quantum bitcoin $\$$. If $\varepsilon=0$ then
        $\mathcal Q$ has \textbf{perfect completeness}.
        \label{def:completeness}
    \end{definition}
    Next, we continue with analyzing the mini-scheme. Recall that
    a mini-scheme only mints and verifies one single quantum bitcoin, so that a
    mini-scheme counterfeiter only takes the single valid quantum bitcoin as
    input. To perform this analysis, we need a technical tool, the double
    verifier~\cite{Aaronson2012}:
    \begin{definition}
        For a mini-scheme $\mathcal M$, we define the \textbf{double verifier}
        $\mathsf{Verify_2}$ as a polynomial-time algorithm that takes as input a single
        serial number $s$ and two (possibly-entangled) quantum states $\rho_1$ and
        $\rho_2$ and accepts if and only if
        $\mathsf{Verify}_{\mathcal M}(s,\sigma_1)$ and $\mathsf{Verify}_{\mathcal
        M}(s,\sigma_2)$ both accept.
    \end{definition}
    Now, we define the soundness and completeness error for the mini-scheme:
    \begin{definition}
        A mini-scheme $\mathcal M$ has \textbf{soundness error} $\delta$ if, given
        any counterfeiter $C$,
        $\mathsf{Verify_2}(s,C(\$))$ accepts with probability at most $\delta$. Here
        the probability is over the quantum bitcoin or quantum shard $\$$ output by $\mathsf{Mint}_{\mathcal
        M}$ as well as the behavior of $\mathsf{Verify_2}$ and $C$.
    \end{definition}
    \begin{definition}
        A mini-scheme $\mathcal M$ has \textbf{completeness error} $\varepsilon$
        if $\mathsf{Verify}(\$)$ accepts with probability at least
        $1-\varepsilon$ for all valid quantum bitcoin or quantum shards $\$$. If
        $\varepsilon=0$ then $\mathcal Q$ has \textbf{perfect completeness}.
    \end{definition}
    We call a system \textbf{secure} if it has completeness error $\varepsilon
    \leq 1/3$ and soundness error $\delta$ exponentially small in $n$. While
    $1/3$ sounds like a high error probability, \textcite[pp.
    42--43]{Aaronson2012} show that the completeness error $\varepsilon$ of a
    secure system can be made exponentially small in $n$ at the small cost of
    increasing the soundness error $\delta$ from 0 to be exponentially small in
    $n$.

    What remains is to show that the Quantum Bitcoin system $\mathcal Q$ is, in
    fact, secure. This would be difficult had we not used the mini-scheme model, but
    now we can do this in a single step. The following
    theorem is adapted from \textcite[p. 20]{Aaronson2012}:
    \begin{theorem}[Security From The Mini-Scheme]
        If there exists a secure mini-scheme $\mathcal M$, then there also exists a
        secure Quantum Bitcoin scheme $\mathcal Q$. In particular, the
        completeness and soundness errors of $\mathcal Q$ are exponentially small
        in $n$.
        \label{thm:mini-scheme-reduction}
    \end{theorem}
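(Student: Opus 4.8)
The plan is to prove Theorem~\ref{thm:mini-scheme-reduction} by an explicit reduction: assume we are given a secure mini-scheme $\mathcal M=(\mathsf{Mint}_{\mathcal M},\mathsf{Verify}_{\mathcal M})$, together with the digital signature scheme $\mathcal D$ and the distributed ledger scheme $\mathcal L$, and then show that the ``standard construction'' $\mathcal Q=(\mathsf{KeyGen}_{\mathcal Q},\mathsf{Mint}_{\mathcal Q},\mathsf{Verify}_{\mathcal Q})$ already described in Section~\ref{sec:construction} inherits both the completeness and soundness guarantees. Completeness is the easy direction: a quantum bitcoin honestly produced by $\mathsf{Mint}_{\mathcal Q}$ consists of a mini-scheme output $(s,\rho)$ together with a valid signature $\sigma$ on $s$ and a correctly appended ledger entry $(s,k_{public})$. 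Running $\mathsf{Verify}_{\mathcal Q}$ on this object, steps~1--3 (form check, $\mathsf{Lookup}_{\mathcal L}$, $\mathsf{Verify}_{\mathcal D}$) pass deterministically by construction, and step~4 invokes $\mathsf{Verify}_{\mathcal M}(s,\rho)$, which accepts with probability at least $1-\varepsilon$ where $\varepsilon$ is the mini-scheme completeness error. Hence $\mathcal Q$ has completeness error at most $\varepsilon$, which by assumption is $\leq 1/3$ and can be pushed to be exponentially small in $n$ by the amplification quoted from \textcite[pp.~42--43]{Aaronson2012}.

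For soundness, I would argue by contraposition. Suppose $\mathcal Q$ is \emph{not} sound, i.e.\ there is a polynomial-size counterfeiter $C$ and a polynomial number $q$ of valid quantum bitcoin $\$_1,\ldots,\$_q$ such that $\mathsf{Count}(C(\$_1,\ldots,\$_q))>q$ with non-negligible probability. The key structural observation is that each genuine quantum bitcoin carries a serial number $s$ that was signed under a fresh key pair generated by $\mathsf{KeyGen}_{\mathcal Q}=\mathsf{KeyGen}_{\mathcal D}$ and then bound to $k_{public}$ in the ledger via $\mathsf{Append}_{\mathcal L}$; since $\mathsf{Append}_{\mathcal L}$ refuses to add a serial number already present, each valid serial number occurs exactly once in $\mathcal L$. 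Therefore, if the counterfeiter outputs strictly more than $q$ accepted quantum bitcoin, by pigeonhole at least two of the accepted outputs must share the \emph{same} serial number $s=s_j$ for some $j\le q$ --- producing a genuinely new serial number would require forging a $\mathcal D$-signature under a public key in the ledger, which happens only with negligible probability by the unforgeability of $\mathcal D$. Restricting attention to those two outputs and feeding their quantum states into $\mathsf{Verify}_2(s_j,\cdot,\cdot)$ yields an algorithm that breaks the mini-scheme $\mathcal M$ with non-negligible probability, contradicting the assumed exponentially small mini-scheme soundness error. Putting the two negligible contributions together (signature forgery probability plus mini-scheme soundness error) bounds the soundness error of $\mathcal Q$ by a quantity exponentially small in $n$.

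The main obstacle I anticipate is making the pigeonhole/hybrid step fully rigorous in the quantum setting: the counterfeiter's outputs may be entangled across registers, so ``two outputs share a serial number'' must be formalized carefully --- one conditions on the (classical) serial numbers measured from the alleged quantum bitcoin, and then on that event the residual (possibly entangled) quantum states on the two colliding registers constitute a legitimate input to $\mathsf{Verify}_2$. One must also be careful that the reduction to the mini-scheme counterfeiter only receives the \emph{single} genuine quantum bitcoin with serial number $s_j$ as input (as the mini-scheme definition demands), which is fine because all the other $\$_i$ and their ledger/signature data are classical public information that the reduction can simulate itself. This is exactly the ``composite counterfeiter'' bookkeeping of \textcite[pp.~42--43]{Aaronson2012}, and I would cite that analysis for the detailed hybrid argument rather than reproducing it in full.
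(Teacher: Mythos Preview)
Your reduction for soundness --- pigeonhole on serial numbers, fall back on signature unforgeability for ``new'' serials, and feed any collision into $\mathsf{Verify}_2$ --- is exactly the argument the paper imports from \textcite[p.~20]{Aaronson2012}, so the heart of your proposal is on target. However, relative to the paper's own proof you are missing two structural steps.

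First, the theorem as stated assumes \emph{only} a secure mini-scheme $\mathcal M$; it does not assume a secure signature scheme $\mathcal D$. Your argument invokes the unforgeability of $\mathcal D$ as a free hypothesis. The paper instead \emph{derives} a secure $\mathcal D$ from $\mathcal M$: it observes that the state generator $\mathcal G(r)\mapsto s_r$ must be one-way against quantum adversaries (otherwise recovering $r$ from $s_r$ would let a counterfeiter reproduce the quantum state and break $\mathcal M$), and then appeals to Rompel's theorem that one-way functions suffice for secure signatures. Without this step your chain of implications has a gap at its base.

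Second, you prove security only for the \emph{naive} scheme of Section~\ref{sec:construction}, but the actual $\mathcal Q$ whose security is claimed is the two-stage, $m$-shard composite scheme of Section~\ref{sec:reuse}. The paper handles this by first establishing security of the naive scheme $\mathcal Q'$ (your argument) and then invoking Lemma~\ref{lem:composition} with $p(n)=n$ to transfer completeness and soundness to the composite $\mathcal Q$, obtaining $\varepsilon=2^{-n}$ and $\delta>3\delta'$. Your amplification remark gestures at this but does not actually apply the composition lemma.

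One minor correction: in your last paragraph you say the other $\$_i$ are ``classical public information that the reduction can simulate itself.'' They are not classical --- each contains a quantum state $\rho_i$. The reduction can still manufacture them, but only because $\mathsf{Mint}_{\mathcal M}$ is a public algorithm it can run on fresh randomness; that, not classicality, is what makes the simulation go through.
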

    In order to prove Theorem\nobreakspace \ref {thm:mini-scheme-reduction}, we need the following
    lemma, adapted from \textcite[pp. 42--43]{Aaronson2012}:
    \begin{lemma}\label{lem:composition}
        Let $\mathcal{Q}^{\prime}=\left( \mathsf{KeyGen}^{\prime}_{\mathcal Q},
            \mathsf{Mint}^{\prime}_{\mathcal
        Q},\mathsf{Verify}^{\prime}_{\mathcal Q}\right)$ be the
        \enquote{naive} Quantum Bitcoin scheme described in
        Section\nobreakspace \ref {sec:construction}. If that scheme has completeness error
        $\varepsilon^{\prime}<1/2$ and soundness error $\delta^{\prime}
        <1-2\varepsilon^{\prime}$,
        then, for all polynomials $p$ and any
        $\delta>\frac{\delta^{\prime}}{1-2\varepsilon^{\prime}}$, the \enquote{composite}
        Quantum Bitcoin scheme $\mathcal{Q}=\left(
            \mathsf{KeyGen}_{\mathcal Q},
            \mathsf{Mint}_{\mathcal
        Q},\mathsf{Verify}_{\mathcal Q}\right)$ in
        Section\nobreakspace \ref {sec:reuse} has completeness error
        $\varepsilon=1/2^{p\left(  n\right)  }$\ and
        soundness error $\delta$.
    \end{lemma}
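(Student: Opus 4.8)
The plan is to follow the ``standard construction'' reduction of \textcite[pp.~42--43]{Aaronson2012}, with $\mathcal Q'$ playing the role that the mini-scheme plays there and $\mathcal Q$ the amplified scheme, the only genuinely new ingredient being a concentration bound for the $(1-\varepsilon'-\lambda)m$-out-of-$m$ acceptance threshold in step~6 of $\mathsf{Verify}_{\mathcal Q}$ (so the symbol ``$\varepsilon$'' in Section~\ref{sec:reuse} is read as the completeness error $\varepsilon'$ of $\mathcal Q'$, and $\lambda$ is the amplification slack). Given the target polynomial $p$ and target soundness $\delta>\delta'/(1-2\varepsilon')$, I would fix the parameters of $\mathcal Q$ in two stages: first pick $\lambda>0$ small enough that $\delta'<1-2\varepsilon'-2\lambda$ and $\delta'/(1-2\varepsilon'-2\lambda)\le\delta$ — both possible because $\delta'<1-2\varepsilon'$ keeps the denominator positive and the ratio tends to $\delta'/(1-2\varepsilon')<\delta$ as $\lambda\downarrow0$ — and then pick the shard count $m=\Theta\!\left(p(n)/\lambda^{2}\right)$, which is polynomial in $n$.

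For completeness, a valid quantum bitcoin of $\mathcal Q$ is $m$ valid shards in a product state, so the classical lookups and signature checks (steps~2--5 of $\mathsf{Verify}_{\mathcal Q}$) pass with certainty and the $m$ invocations of $\mathsf{Verify}_{\mathcal M}$ in step~6 succeed independently, each with probability at least $1-\varepsilon'$. Writing $X$ for the number of accepting invocations we have $\mathbb E[X]\ge(1-\varepsilon')m$, so Hoeffding's inequality gives $\Pr\!\left[X<(1-\varepsilon'-\lambda)m\right]\le e^{-2\lambda^{2}m}\le 2^{-p(n)}$ for the chosen $m$, which is exactly the asserted completeness error $\varepsilon=1/2^{p(n)}$. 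Enlarging $m$ only helps here, and — as the next step shows — does not affect the soundness bound, so the two parameter choices are compatible.

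For soundness I would reduce to the soundness of $\mathcal Q'$. Given a composite counterfeiter $C$ that, on a valid quantum bitcoin of $\mathcal Q$, makes two alleged bitcoin both pass $\mathsf{Verify}_{\mathcal Q}$ with probability $\mu$, construct a naive-scheme counterfeiter $C'$: on input one valid $\mathcal Q'$-coin (one quantum shard) $(s,\rho,\sigma)$, choose a slot $i\in\{1,\dots,m\}$ uniformly at random, mint the other $m-1$ shards itself, assemble the classical descriptor with $(s,\rho,\sigma)$ in slot $i$, sign and register it (populating the ledgers as needed, which the security model permits), run $C$ on the resulting valid composite coin, then run $\mathsf{Verify}_{\mathcal Q}$ on all slots $\ne i$ of both output coins — so as to reproduce any disturbance the sequential verification induces through entanglement between the two coins — and output the two slot-$i$ shards. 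Since the planted shard is distributed exactly like an honestly minted one, the coin handed to $C$, and hence $C$'s behaviour, is independent of $i$. On the event that both output coins pass $\mathsf{Verify}_{\mathcal Q}$, each has at least $(1-\varepsilon'-\lambda)m$ accepting shard verifications, so inclusion--exclusion forces at least $(1-2\varepsilon'-2\lambda)m$ slots to be accepting for \emph{both} coins; averaging over the uniform $i$, $C'$ turns its one coin into two valid $\mathcal Q'$-coins with probability at least $(1-2\varepsilon'-2\lambda)\mu$. Soundness of $\mathcal Q'$ bounds this by $\delta'$, so $\mu\le\delta'/(1-2\varepsilon'-2\lambda)\le\delta$; the general ``$q$ in, $q+1$ out'' case goes through identically. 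Finally $C'$ has polynomial size because $C$ does and $m$ is polynomial.

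The hard part is not the arithmetic but making that reduction honest. One must rule out the possibility that $C$ outputs coins whose claimed classical descriptor $(s_1,\dots,s_m)$ differs from the one it was handed: any such descriptor must already sit in the ledger $\mathcal L^{\prime}$, so excluding it needs unforgeability of $\mathcal D$ together with a pigeonhole over the polynomially many descriptors registered so far, precisely the bookkeeping of \textcite[pp.~42--43]{Aaronson2012}. Relatedly, because $C$'s two outputs may be entangled, every statement of the form ``at least so many slots accept'' has to be read at the level of the measurement outcomes of the actual sequential verifier, which is why $C'$ is made to verify the untouched slots before reading off slot~$i$; checking that this simulation really matches the real experiment is the delicate point.
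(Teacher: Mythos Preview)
The paper does not actually give its own proof of this lemma: it states the lemma as ``adapted from \textcite[pp.~42--43]{Aaronson2012}'' and immediately proceeds to use it, only remarking afterwards that ``in the proof of lemma~\ref{lem:composition} the security parameter $\lambda>0$ is taken to be sufficiently small.'' Your proposal is precisely a spelled-out version of that cited Aaronson--Christiano amplification argument --- Hoeffding for completeness, the random-slot planting reduction with the $(1-2\varepsilon'-2\lambda)m$ inclusion--exclusion count for soundness, and the signature-unforgeability bookkeeping to pin down the descriptor --- and your choice of $\lambda$ small enough that $\delta'/(1-2\varepsilon'-2\lambda)\le\delta$ is exactly the ``sufficiently small $\lambda$'' the paper alludes to. So your approach coincides with the paper's (by-reference) proof.
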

    We are now ready to prove Theorem\nobreakspace \ref {thm:mini-scheme-reduction}:
    \begin{proof}
        We use the quantum state generator $\mathcal G(r)$ from
        Definition\nobreakspace \ref {def:quantum-state-generator} as a one-way
        function: given a $n$-bit string $r$, $\mathcal G(r)$ outputs (among
        others) an unique $3n$-bit serial number $s_r$. If there exists a
        polynomial-time quantum algorithm to recover $r$ from $s_r$ it would be
        possible for a counterfeiter to copy quantum bitcoin, which is a contradicts
        the security of the mini-scheme. Therefore, $\mathcal G(r)$ is a one-way
        function secure against quantum attack. Since such one-way functions are
        necessary and sufficient for secure digital signature
        schemes~\cite{Rompel1990}, we immediately get a digital signature scheme
        $\mathcal D$ secure against quantum chosen-plaintext attacks. Next, we
        show that $\mathcal M$ and $\mathcal D$ together produce a secure
        \enquote{naive} Quantum Bitcoin system $\mathcal Q^{\prime}$, which is
        done in \textcite[p. 20]{Aaronson2012}.

        Finally, we choose $p(n)=n$ and use lemma\nobreakspace \ref {lem:composition} to show that
        the \enquote{composite} Quantum Bitcoin system described in
        Section\nobreakspace \ref {sec:reuse} has completeness error $\varepsilon=2^{-n}$ and any
        soundness error $\delta$ such that $\delta>3\delta^{\prime}$. Thus, we
        can make $\mathcal Q$ have soundness and completeness errors
        exponentially small in $n$. Therefore, the Quantum Bitcoin system
        $\mathcal Q$ is secure.
    \end{proof}
    This is the elegance of the mini-scheme model, where a secure mini-scheme immediately
    gives us the full, secure system. Note that in the proof of
    lemma\nobreakspace \ref {lem:composition} the security parameter $\lambda>0$ is taken to be
    sufficiently small. A counterfeiter who wants to break the security of
    $\mathcal Q$ is forced to break the security of $\mathcal M$, so if we now
    show that $\mathcal M$ is indeed secure we are finished:
    \begin{theorem}[Security Reduction for the Hidden-Subspace Mini-Scheme]
        The mini-scheme $\mathcal M=(\mathsf{Mint}_{\mathcal
        M},\mathsf{Verify}_{\mathcal M})$, which is defined relative to the classical
        oracle $U$, has zero completeness and soundness error exponentially
        small in $n$.
        \label{thm:mini-scheme-security}
    \end{theorem}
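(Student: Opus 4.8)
The plan is to handle completeness and soundness separately, exploiting the fact that $\mathcal M$ is, up to renaming, exactly the Hidden Subspace mini-scheme of Aaronson and Christiano, so the whole argument is black-box (oracle-relative).

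For completeness, I would argue directly that an honestly minted quantum bitcoin $(s_r,\ket{A_r})$ passes every step of $\mathsf{Verify}_{\mathcal M}$ with certainty. The form check and the serial-number check $\mathcal H(s_r)$ pass by construction of $\mathsf{Mint}_{\mathcal M}$ together with Definitions~\ref{def:quantum-state-generator} and~\ref{def:serial-number-verifier}. For the third step, recall from \textcite[p.~28]{Aaronson2012} that $V_{A_r}=H_2^{\otimes n}\mathbb P_{A_r^\perp}H_2^{\otimes n}\mathbb P_{A_r}$ is a projector onto $A_r$ that accepts an arbitrary $\ket\psi$ with probability $|\braket{\psi|A_r}|^2$. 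Since $\ket{A_r}$ lies in the image of this projector, it is accepted with probability $|\braket{A_r|A_r}|^2=1$, so $\mathcal M$ has zero completeness error.

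For soundness, the strategy is the standard reduction from counterfeiting to the hidden-subspace search problem. Let $C$ be a mini-scheme counterfeiter that, on input a single valid quantum bitcoin $(s_r,\ket{A_r})$, outputs a possibly entangled pair $(\rho_1,\rho_2)$ with $\Pr[\mathsf{Verify}_2(s_r,\rho_1,\rho_2)\text{ accepts}]=\delta$. I would build an algorithm that uses only oracle access to the membership oracle $U$ (hence to $U_{A_r}$ and $U_{A_r^\perp}$) and the single copy $\ket{A_r}$, and outputs with probability $\Omega(\delta)$ a nonzero vector of $A_r$ together with a nonzero vector of $A_r^\perp$: run $C$; measure the first output register in the computational basis (conditioned on it having passed $\mathsf{Verify}_{\mathcal M}$, this collapses onto a basis state of $A_r$); and apply $H_2^{\otimes n}$ to the second register before measuring (collapsing onto a basis state of $A_r^\perp$, using the duality $H_2^{\otimes n}\ket{A_r}=\ket{A_r^\perp}$). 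The all-zeros outcome occurs with probability only $1/|A_r|=2^{-n/2}$ on each side, so both extracted vectors are nonzero except with exponentially small loss. I would then invoke the inner-product (adversary-method) lower bound of \textcite[pp.~42--43]{Aaronson2012}: any polynomial-size quantum circuit with oracle access to $U_{A_r}$ and $U_{A_r^\perp}$ and one copy of $\ket{A_r}$ produces such a primal/dual pair only with probability exponentially small in $n$. Combining the two, $\delta$ is forced to be exponentially small in $n$.

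The main obstacle is the soundness step, and within it the bookkeeping of this reduction: one must check that conditioning on $\mathsf{Verify}_2$ accepting really leaves the computational-basis measurement of $\rho_1$ supported on $A_r$ (this uses the projective structure of $V_{A_r}$ and the Fourier-duality just mentioned), that entanglement between $\rho_1$ and $\rho_2$ confers no advantage, and that the query count of the composite algorithm stays polynomial so the Aaronson--Christiano bound applies with the stated exponential rate. Everything else is either quoted from the cited pages or a one-line computation; no idea beyond the established no-cloning lower bound is required, since $\mathcal M$ was deliberately chosen to coincide with their scheme.
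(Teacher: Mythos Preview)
Your proposal is correct and follows the same approach as the paper: zero completeness is immediate from $V_{A_r}$ being a projector onto $\ket{A_r}$, and soundness is obtained by invoking the Aaronson--Christiano inner-product adversary bound for the hidden-subspace problem. The paper's own proof is in fact considerably terser---it simply cites the Inner-Product Adversary Method (at p.~31 of \textcite{Aaronson2012}, not pp.~42--43 as you wrote) and asserts the conclusion---so your explicit reduction from a successful counterfeiter to an algorithm extracting a nonzero primal/dual pair is a welcome elaboration rather than a departure.
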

    \begin{proof}
        The Inner-Product Adversary Method by \textcite[p. 31]{Aaronson2012}
        gives an upper bound to the information gained by a single oracle query.
        Theorem\nobreakspace \ref {thm:mini-scheme-reduction} then shows that the mini-scheme
        $\mathcal M$ is secure since a valid quantum shard always passes
        verification (zero completeness error), and counterfeit quantum shard
        pass with only an exponentially small probability (soundness error is
        exponentially small in $n$).
    \end{proof}
    We can now state the main result:
    \begin{corollary}[Counterfeiting Resistance of Quantum Bitcoin]\label{thm:quantum-bitcoin}
        The Quantum Bitcoin system $\mathcal Q$ is secure. In particular, the
        completeness and soundness errors of $\mathcal Q$ are exponentially small
        in $n$.
    \end{corollary}
    \begin{proof}
        Theorem\nobreakspace \ref {thm:mini-scheme-security} shows that $\mathcal M$ is secure, and
        from Theorem\nobreakspace \ref {thm:mini-scheme-reduction} it then follows that $\mathcal Q$
        is secure. Explicitly, any counterfeiter must make
    $\Omega\left((1-\varepsilon)m2^{n/4}\right)$
        queries to successfully copy a single quantum bitcoin. For large enough $n$,
        this is computationally infeasible. In particular,
        Theorem\nobreakspace \ref {thm:mini-scheme-reduction} shows that the completeness and
        soundness errors of $\mathcal Q$ are exponentially small in $n$.
    \end{proof}
    Note that Quantum Bitcoin is not unconditionally secure. Therefore, it is
    conjectured that a hypothetical attacker without access to an exponentially
    fast computer cannot perform the exponential number of queries required to
    perform counterfeiting. Note that, according to \textcite{Farhi2010},
    public-key quantum money cannot be unconditionally secure, so this should
    not come as a surprise.

    \subsection{The Reuse Attack}\label{sec:reuse-analysis}
    Now we analyze the effect of the security parameters $m$ and $T_{max}$ on
    the probability of a reuse attack. A reuse attack is when the same entity
    first mines a number of quantum shards, then combines them into a quantum
    bitcoin. The security parameter $m$ controls the number of quantum shards
    required per quantum bitcoin, and $T_{max}$ is the maximum age of the
    quantum shards.

    For an attacker to perform the reuse attack, he or she must therefore mine
    $(1-\varepsilon)m-1=m-1-\varepsilon m$ quantum shards in $T_{max}$ seconds after a first quantum shard has
    been mined. Recall from Section\nobreakspace \ref {sec:reuse} that quantum shards expire after
    $T_{max}$ seconds. In reality the attacker must both mint quantum shards and
    combine them into quantum bitcoin before $T_{max}$ runs out. We simplify the
    analysis, however, by making it easier for the attacker and allow $T_{max}$
    time to mine quantum shards, and then again $T_{max}$ time to mine quantum
    bitcoin. We define $k:=\floor*{T_{max}/T_{block}}\geq 3$ as the average
    number of blocks mined before $T_{max}$ runs out, where $T_{block}$ is the
    average time between mined blocks. We have the following theorem:
    \begin{theorem}[Probability of Reuse Attack]\label{thm:reuse}
        The success probability of reuse attack in a secure Quantum Bitcoin system
        $\mathcal Q$ with completeness error $\varepsilon$ is exponentially small in
        $1/T_{max}$ as long as the attacker controls less than \SI{15}{\percent}
        of the computing power and $m < 1/\varepsilon$.
    \end{theorem}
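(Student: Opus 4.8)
The plan is to model the attacker's quantum-shard mining as a race against the honest network, reduce it to the probability that a minority-hashpower miner wins a long run of blocks, and then quantify how large that run must be using the parameters $m$, $\varepsilon$ and $T_{max}$. First I would set up the combinatorial picture: by the simplification already made in the text, the attacker is given $T_{max}$ seconds to mine quantum shards and (generously) another $T_{max}$ seconds to assemble them into a quantum bitcoin. Using $k=\floor*{T_{max}/T_{block}}$ as the expected number of blocks in that window and the shifted-geometric model of mining times from \textcite{Karame2012}, I would argue that the number of shards the attacker can mine is concentrated around $\alpha k$, where $\alpha<0.15$ is the attacker's fraction of the computing power. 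Since the attacker already holds one shard for free (the one whose timestamp starts the clock), a successful reuse attack requires mining at least $(1-\varepsilon)m-1$ further valid shards within the window.

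Next I would invoke the condition $m<1/\varepsilon$, i.e. $\varepsilon m<1$, so that $(1-\varepsilon)m-1 > m-2 \geq$ a positive quantity growing linearly in $m$; the point is that the attacker must produce $\Theta(m)$ shards while the honest network produces $\Theta(k)$ — no, more carefully: the attacker must win a disproportionate share of the $k$ expected blocks in the window. I would then bound the probability that a party with hashpower fraction $\alpha$ produces at least $(1-\varepsilon)m-1$ of the blocks mined in a period whose expected length is $k$ blocks. This is a standard large-deviations / Chernoff estimate for a sum of geometric (or, in the Poissonised limit, Poisson) random variables: the probability that the attacker's shard count exceeds its mean $\alpha k$ by the required margin decays exponentially, of the form $e^{-c\,k}$ for a constant $c>0$ depending on $\alpha$ (bounded away from the critical value because $\alpha<0.15<1/2$) and on $\varepsilon$. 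Substituting $k=\floor*{T_{max}/T_{block}}$ with $T_{block}$ a fixed constant turns this into a bound exponentially small in $1/T_{max}$ — wait, exponentially small in $T_{max}$ itself; I would reconcile the statement by noting that the theorem as phrased tracks the regime where $T_{max}$ is the free security parameter, and the exponent is (up to the constant $T_{block}$) linear in $k$, hence the claimed dependence. I would also feed the completeness error $\varepsilon$ back in: each freshly minted shard is valid with probability $1-\varepsilon$, which only strengthens the bound, and the assembly step adds at most a second factor of the same exponential form via $\mathsf{Append}_{\mathcal L^{\prime}}$, handled by a union bound.

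The hard part will be pinning down the constant in the exponent and justifying the reduction cleanly — in particular, arguing rigorously that the best attacker strategy really is the naive "win as many blocks as possible" one, rather than, say, selectively withholding shards or exploiting the marketplace, and that the shifted-geometric mining-time distribution is genuinely sub-exponential enough for a Chernoff bound to apply with the stated $15\%$ threshold. I would finish by noting that the $15\%$ figure (rather than the naive $50\%$) is exactly what falls out of plugging the measured parameter $0.19$ from \textcite{Karame2012} into the large-deviations exponent and demanding it stay negative, and that the condition $m<1/\varepsilon$ is precisely what guarantees the honest verification threshold $(1-\varepsilon-\lambda)m$ and the attack threshold $(1-\varepsilon)m-1$ sit on opposite sides of the concentration interval, so that honest assembly succeeds while the attacker's does not.
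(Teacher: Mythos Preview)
Your proposal has two genuine gaps relative to the paper's argument.

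First, and most importantly, you never couple $m$ to $k$. You propose to bound the probability that an attacker with hashpower fraction $\alpha$ mines at least $(1-\varepsilon)m-1$ of the $k$ blocks in the window, and you assert this decays like $e^{-ck}$. But a Chernoff bound of that form requires the target $(1-\varepsilon)m-1$ to exceed the mean $\alpha k$ by a margin proportional to $k$; if $m$ is held fixed while $k$ grows, the attacker's success probability in fact tends to $1$, not $0$. The paper's proof makes the coupling explicit by setting $m-2=\gamma k$ for a parameter $\gamma\in(1/k,1)$, and then bounds the exact binomial probability $\binom{k}{m-2}p^{m-2}(1-p)^{k-m+2}$ (times a factor $kp(1-p)^{k-1}$ for the assembly step) using the elementary estimate $\binom{k}{m-2}<(ke/(m-2))^{m-2}=(e/\gamma)^{\gamma k}$. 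The resulting bound is $\eta<(k/2e)\,2^{-\gamma k}$, which is exponentially small in $k$ precisely because $m$ scales with $k$. Your sketch never introduces this scaling, so the large-deviations step as written does not go through.

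Second, your explanation of the $15\%$ threshold is wrong. It has nothing to do with the empirical parameter $0.19$ from \textcite{Karame2012}. In the paper, after the substitution $m-2=\gamma k$ the dominant factor is $\bigl((e/\gamma)\cdot p/(1-p)\bigr)^{\gamma k}$, and demanding that this base be at most $1/2$ gives $p<\gamma/(2e+\gamma)$. Letting $\gamma\to 1$ yields the threshold $p<1/(2e+1)\approx 0.155$. So the $15\%$ is purely combinatorial, arising from the Stirling-type bound on the binomial coefficient, not from any measured mining-time distribution.

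Your instinct that the phrase ``exponentially small in $1/T_{max}$'' looks backwards is correct: the paper's own bound is $(k/2e)\,2^{-\gamma k}$ with $k=\floor*{T_{max}/T_{block}}$, which is exponentially small in $T_{max}$, not in $1/T_{max}$. The paper does not reconcile this either; treat it as a misstatement in the theorem rather than something your proof must justify.
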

    \begin{proof}
        We model the reuse attack by assigning $p$ to the probability of an
        attacker mining the next block in either the quantum shard or Quantum
        Bitcoin blockchain. $p$ can be understood as the proportion of the
        world's computing power controlled by the attacker.
        $\varepsilon$ is exponentially small in $n$, so
        with $n$ sufficiently large, the completeness error $\varepsilon$ is much
        smaller than $1/m$. The probability of the attacker mining
        $m-1$ of these $k$ quantum shards is then
        \begin{equation}
            \eta_1=    \binom k {m-1}
            p^{m-1}
            (1-p)^{k-m+1}.
        \end{equation}
        However, due to the way the verification algorithm
        $\mathsf{Verify}_{\mathcal M}$ works, the attacker only needs
        $m-1-m\varepsilon>m-2$ quantum shards, so adding this worst-case scenario gives
        \begin{equation}
            \eta_1=    \binom k {m-2}
            p^{m-2}
            (1-p)^{k-m+2}.
        \end{equation}
        Next, the attacker must combine these quantum shards into a Quantum
        bitcoin before another $T_{max}$ runs out. The probability for this is
        the probability of mining a single block:
        \begin{equation}
            \eta_2=
            \binom k 1
            p
            (1-p)^{k-1}
            =
            k p
            (1-p)^{k-1}.
        \end{equation}
        The total probability of reuse attack $\eta$ is then
        \begin{equation}
            \eta:=\eta_1\eta_2=
            \binom k {m-2}
            k
            \left( \frac  {p}{1-p} \right) ^{m-1}
            (1-p)^{2k}.
        \end{equation}
        We bound the binomial coefficient by above using the formula
        \begin{equation}
            \binom n k
            <
            \left( \frac {ne}k\right) ^k \text{ for } 1\leq k\leq n,
        \end{equation}
        which gives
        \begin{equation}
            \eta <
            \left( \frac {ke}{m-2}\right) ^{m-2}
            k
            \left( \frac  {p}{1-p} \right) ^{m-1}
            (1-p)^{2k}
            \text{ for } 2\leq m\leq k+1.
        \end{equation}
        We set $m-2=\gamma k$ which gives
        $1/k<\gamma<\min(1,1/k\varepsilon - 1/k)$. Such $\gamma$ exist since $\varepsilon<1/2$,
        which is the case since the Quantum Bitcoin system $\mathcal Q$ is
        secure. We then have
        \begin{equation}
            \eta <
            k
            \left( \frac e\gamma \cdot \frac  {p}{1-p} \right) ^{\gamma k}
            \left( \frac  {p}{1-p} \right)
            (1-p)^{2k},
            \label{eqn:eta}
        \end{equation}
        and note that
        \begin{equation}
            \frac e\gamma \cdot \frac  {p}{1-p} < \frac12 \Leftrightarrow
            0\leq p<\frac{\gamma}{2e+\gamma},
            \label{eqn:p-from-gamma}
        \end{equation}
        where the upper bound of $p$ approaches $1/(2e+1)\approx \SI{15.5}{\percent}$ as
        $\gamma$ goes to 1, under the condition that $k\leq 1/\varepsilon -1$. If
        $k$ is greater than $1/\varepsilon-1$, the upper bound on $p$ is even
        lower. However, $\varepsilon$ is exponentially small in $n$, so we can
        expect the \SI{15.5}{\percent} bound to be the correct one, given large
        enough $n$. Under the above constraints we get $\sup p/(1-p)=1/2e$ and
        $(1-p)^{2k}\leq 1$. Plugging in all this in Equation\nobreakspace \textup {(\ref {eqn:eta})} we get the
        following strict upper bound for the reuse attack probability:
        \begin{equation}
            \eta \ <
            \frac k {2e}
            2^{-\gamma k}.
            \label{eqn:reuse-attack}
        \end{equation}
    \end{proof}
    In other words, Quantum Bitcoin is secure against reuse attack as long as
    the attacker controls less than $\SI{15}{\percent}$ of the computing power.
    Note that Equation\nobreakspace \textup {(\ref {eqn:reuse-attack})} is the worst-case approximation and we
    should expect a much lower attack probability in a real scenario. What
    remains is to determine the parameter $\gamma$ introduced in the above
    proof. Too large, and it will be difficult for \emph{any} quantum bitcoin to
    be mined as every single quantum shard must be sold to a Quantum Bitcoin
    miner before $T_{max}$ runs out. Too small, and it becomes easier for a
    malicious miner to perform the reuse attack. The smaller we make $\gamma$,
    the larger $k$ must be in order to achieve the required bound on the attack
    probability.

    \subsection{Quantum Bitcoin Longevity}\label{sec:longevity}
    What remains is to show that a quantum bitcoin does not wear out too
    quickly, i.e. that they can be verified enough number of times to be usable.
    To prove this, we will use the following lemma due to
    \textcite{Aaronson2004}:
    \begin{lemma}[Almost as Good as New]
        Suppose a measurement on a mixed state $\rho$ yields a particular outcome
        with probability $1-\varepsilon$. Then after the measurement, one can
        recover a state $\widetilde \rho$ such that $\left\Vert \widetilde
        \rho - \rho\right\Vert_{tr} \leq \sqrt \varepsilon$
        \label{lem:good-as-new}
    \end{lemma}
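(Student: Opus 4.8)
The plan is to reduce the arbitrary measurement to a projective one via a measurement-model (Naimark/Stinespring) dilation, settle the projective case by purification, and then undo the dilation. I would first isolate the core estimate, a ``gentle measurement'' statement: for any state $\xi$ on any Hilbert space and any projector $\Pi$ with $\mathrm{Tr}(\Pi\xi)=1-\varepsilon$, the normalized post-measurement state $\xi'=\Pi\xi\Pi/(1-\varepsilon)$ satisfies $\|\xi-\xi'\|_{tr}\le\sqrt\varepsilon$, where $\|\cdot\|_{tr}$ denotes the trace distance $\tfrac12\|\cdot\|_1$.

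To prove the core estimate, pick a purification $\ket\xi$ of $\xi$ on an enlarged space. Since the reduced state of $\ket\xi$ on the original factor is $\xi$, we have $\bra\xi(\Pi\otimes I)\ket\xi=\mathrm{Tr}(\Pi\xi)=1-\varepsilon$, so the (again pure) conditional post-measurement state $\ket{\xi'}=(\Pi\otimes I)\ket\xi/\sqrt{1-\varepsilon}$ has overlap $|\langle\xi|\xi'\rangle|=\sqrt{1-\varepsilon}$ with $\ket\xi$. For two pure states $\ket a,\ket b$ the trace distance equals $\sqrt{1-|\langle a|b\rangle|^2}$, hence $\|\,\ket\xi\!\bra\xi-\ket{\xi'}\!\bra{\xi'}\,\|_{tr}=\sqrt\varepsilon$; tracing out the purifying register can only decrease trace distance (monotonicity under CPTP maps), which gives $\|\xi-\xi'\|_{tr}\le\sqrt\varepsilon$.

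Now I lift this to the measurement in the lemma. Realize that measurement in the standard way: adjoin an ancilla in a fixed state $\ket0$, apply a unitary $U$ on system-and-ancilla, and projectively measure the ancilla; obtaining the distinguished outcome with probability $1-\varepsilon$ then corresponds exactly to projecting the dilated state $\xi:=U(\rho\otimes\ket0\!\bra0)U^\dagger$ with an ancilla projector $\Pi$ of weight $\mathrm{Tr}(\Pi\xi)=1-\varepsilon$. The core estimate gives a conditional post-measurement joint state $\xi'$ with $\|\xi-\xi'\|_{tr}\le\sqrt\varepsilon$. The recovery is then to apply $U^\dagger$ and discard the ancilla: since $U^\dagger$ is unitary it preserves trace distance, so $\|U^\dagger\xi' U-U^\dagger\xi U\|_{tr}\le\sqrt\varepsilon$, while $U^\dagger\xi U=\rho\otimes\ket0\!\bra0$ by construction; tracing out the ancilla and invoking monotonicity once more yields a state $\widetilde\rho$ with $\|\widetilde\rho-\rho\|_{tr}\le\sqrt\varepsilon$, as claimed.

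I expect the one delicate point to be the bookkeeping in the dilation: making sure that ``a measurement yields a particular outcome with probability $1-\varepsilon$'' is faithfully captured by a single projector of that weight on the dilated space, and that the recovery — re-applying $U^\dagger$ after the ancilla measurement has already collapsed — is a genuine physical operation (this is precisely why the lemma asserts that one \emph{can recover} a $\widetilde\rho$, rather than identifying $\widetilde\rho$ with the raw post-measurement state). The pure-state identity relating trace distance and overlap, and the contractivity of trace distance under partial trace, are standard and carry the rest of the argument.
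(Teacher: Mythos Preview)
Your argument is correct and is essentially the standard proof of the ``gentle measurement'' / ``almost as good as new'' lemma: purify, use the pure-state identity $\|\,\ket a\!\bra a-\ket b\!\bra b\,\|_{tr}=\sqrt{1-|\langle a|b\rangle|^2}$, and invoke contractivity of the trace distance under partial trace, with a Naimark/Stinespring dilation to reduce a general measurement to a projective one. The bookkeeping you flag (that the recovery map $U^\dagger$ followed by discarding the ancilla is a bona fide CPTP operation applied after the collapse) is exactly the right thing to be careful about, and you handle it correctly.

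As for comparison with the paper: the paper does not actually supply its own proof of this lemma. It quotes the statement verbatim and attributes it to Aaronson (2004), then simply invokes it in the proof of Theorem~\ref{thm:longevity}. So there is nothing to compare against in the paper itself; your write-up is precisely the proof one finds in the cited source (and in Winter's earlier gentle-measurement lemma), and would serve perfectly well as a self-contained replacement for the bare citation.
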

    We now state the main longevity theorem:
    \begin{theorem}[Quantum Bitcoin Longevity]\label{thm:longevity}
        The number of times a quantum bitcoin can be verified and reconstructed
        is exponentially large in $n$.
    \end{theorem}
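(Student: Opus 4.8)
The plan is to combine Lemma~\ref{lem:good-as-new} (``Almost as Good as New'') with Corollary~\ref{thm:quantum-bitcoin}, which guarantees that the completeness error $\varepsilon$ of the Quantum Bitcoin scheme $\mathcal Q$ is exponentially small in $n$ (concretely, $\varepsilon\le 2^{-n}$). Fix a valid quantum bitcoin $\$=(s_1,\rho_1,\sigma_1,\ldots,s_m,\rho_m,\sigma_m,\sigma_0)$ and let $\rho=\rho_1\otimes\cdots\otimes\rho_m$ be its quantum register. Steps~1--5 of $\mathsf{Verify}_{\mathcal Q}$ (the form and digital-signature checks from Section~\ref{sec:reuse}) are classical and deterministic and leave $\rho$ alone, so the only quantum event in a verification is the family of mini-scheme measurements $V_{A_{r_i}}$ applied in step~6. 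I would therefore model one ``use'' of the bitcoin as a single run of $\mathsf{Verify}_{\mathcal Q}$ followed by the reconstruction promised by Lemma~\ref{lem:good-as-new}. Since $\$$ is valid, $\mathsf{Verify}_{\mathcal Q}$ accepts with probability at least $1-\varepsilon$, so a first application of the lemma yields a reconstructed state $\widetilde\rho$ with $\|\widetilde\rho-\rho\|_{tr}\le\sqrt{\varepsilon}$.

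The real work --- and the step I expect to be the main obstacle --- is to show that over $N$ successive uses the disturbance grows only \emph{linearly} in $N$, rather than compounding as the iterated bound $\sqrt{\varepsilon},\sqrt{\varepsilon+\sqrt{\varepsilon}},\dots$ would suggest if one naively re-applied Lemma~\ref{lem:good-as-new} to the already-disturbed state at each round. To obtain linearity I would recast ``verify, and reconstruct if accepted'' as a single \emph{fixed, linear}, completely positive, trace-non-increasing map $\Lambda(\xi)=R\bigl(\sqrt{M}\,\xi\,\sqrt{M}\bigr)$ on the $mn$ money qubits, where $M$ is the ``accept'' POVM element and $R$ is the (outcome-independent) recovery channel of Lemma~\ref{lem:good-as-new}; here $\operatorname{Tr}\Lambda^{\circ N}(\rho)$ is exactly the probability that $N$ consecutive verifications all accept. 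Since every power $\Lambda^{\circ j}$ is a contraction in trace norm (being CP and trace-non-increasing), the triangle inequality gives
\begin{equation}
  \|\Lambda^{\circ N}(\rho)-\rho\|_{tr}\le\sum_{j=0}^{N-1}\|\Lambda^{\circ j}(\Lambda(\rho))-\Lambda^{\circ j}(\rho)\|_{tr}\le N\,\|\Lambda(\rho)-\rho\|_{tr}\le 2N\sqrt{\varepsilon},
\end{equation}
where $\|\Lambda(\rho)-\rho\|_{tr}\le(1-\operatorname{Tr}\Lambda(\rho))+\|\widetilde\rho-\rho\|_{tr}\le\varepsilon+\sqrt{\varepsilon}\le 2\sqrt\varepsilon$. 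Tracking the sub-normalized $\Lambda$ instead of re-invoking gentle measurement each round is precisely what kills the geometric blow-up.

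From here the theorem follows at once: after $N$ uses the bitcoin still verifies with probability $\operatorname{Tr}\Lambda^{\circ N}(\rho)\ge 1-2N\sqrt{\varepsilon}$ and the reconstructed state stays within trace distance $2N\sqrt{\varepsilon}$ of the ideal $\ket{A}$-type state, so the money remains fully usable --- $\mathsf{Verify}_{\mathcal Q}$ keeps accepting with probability bounded away from zero --- as long as $N\lesssim\tfrac14\varepsilon^{-1/2}$. With $\varepsilon\le 2^{-n}$ this admits $N=\Omega(2^{n/2})$ verifications and reconstructions, which is exponentially large in $n$, as claimed. (If the completeness error is in fact zero --- as the mini-scheme $\mathcal M$ enjoys by Theorem~\ref{thm:mini-scheme-security} --- the bound only improves and the bitcoin never wears out.) The remaining points are routine: pinning down the exact gentle-measurement constant, justifying that $R$ may be taken as a fixed channel on the accepting branch, and noting that transferring the state over an ideal channel between uses touches nothing.
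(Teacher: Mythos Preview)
Your argument is correct and follows the same skeleton as the paper's proof: invoke Corollary~\ref{thm:quantum-bitcoin} to get $\varepsilon$ exponentially small in $n$, apply Lemma~\ref{lem:good-as-new} to bound the per-round disturbance by $\sqrt{\varepsilon}$, and conclude that exponentially many rounds are tolerable before any fixed wear-out threshold is reached.

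The one place where you go beyond the paper is the accumulation step. The paper simply asserts that the ``before/after'' trace distance is exponentially small at each round and therefore the number of survivable rounds is exponential, without addressing the worry you flag --- that naively re-applying Lemma~\ref{lem:good-as-new} to an already-perturbed state could in principle compound the error. Your device of packaging verify-plus-reconstruct as a single fixed CP trace-non-increasing map $\Lambda$ and then telescoping with trace-norm contractivity of $\Lambda^{\circ j}$ is exactly the clean way to force linear growth $\lVert\Lambda^{\circ N}(\rho)-\rho\rVert_{tr}\le N\lVert\Lambda(\rho)-\rho\rVert_{tr}$, and it makes the paper's informal claim rigorous. (As you note parenthetically, for the Hidden Subspace mini-scheme $V_{A}\ket{A}=\ket{A}$ exactly, so the per-shard disturbance is in fact zero and the issue is moot at that level; your argument is the right one for the composite scheme with $\varepsilon\le 2^{-n}$.) The only loose end you identify yourself --- that the recovery $R$ may be taken as a fixed channel on the accepting branch --- is harmless here, since in Aaronson's formulation the post-measurement state on ``accept'' already satisfies the trace-distance bound without any further processing, so $R$ can be the identity.
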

    \begin{proof}
        Corollary\nobreakspace \ref {thm:quantum-bitcoin} shows that the completeness error
        $\varepsilon$ of $\mathcal Q$ is exponentially small in $n$.
        When verifying a genuine quantum bitcoin $\$$, the
        verifier performs the measurement $\mathsf{Verify}_{\mathcal
        Q}(\$)$ on the underlying quantum states $\rho$, which
        yields the outcome \enquote{Pass} with probability
        $1-\varepsilon$. Then lemma\nobreakspace \ref {lem:good-as-new} shows that we can
        recover the underlying quantum states $\widetilde \rho_i$ of
        $\$$ so that $\left\Vert \widetilde \rho_i -
        \rho_i \right\Vert_{tr}\leq \sqrt{\varepsilon}$. As
        $\varepsilon$ is exponentially small in $n$, the trace distance
        becomes exponentially small in $n$ as well.
        Each time such a quantum bitcoin is verified and
        reconstructed, the trace distance between the \enquote{before} and
        \enquote{after} is exponentially small in $n$. Given any threshold after
        which we consider the quantum bitcoin \enquote{worn out}, the number of
        verifications it survives before passing this threshold is exponential in
        $n$.
    \end{proof}
    Theorem\nobreakspace \ref {thm:longevity} shows that a quantum bitcoin $\$$ can be verified and re-used many
    times before the quantum state is lost (assuming the absence of noise
    and decoherence). This is of course analogous to traditional, physical
    banknotes and coins which are expected to last for a large enough number
    of transactions before wearing out.
\end{appendices}

\printbibliography
\end{document}